\title{On Coloring Resilient Graphs}
\author{Jeremy Kun \and Lev Reyzin}
\institute{
Department of Mathematics, Statistics, and Computer Science\\
University of Illinois at Chicago,
Chicago, IL 60607\\
\texttt{\{jkun2,lreyzin\}@math.uic.edu}
}
\newtheorem{thm}{Theorem}
\newtheorem{propn}{Proposition}
\newtheorem{obs}{Observation}
\newtheorem{prob}{Problem}
\date{}
\begin{document}
\maketitle

\begin{abstract} 
We introduce a new notion of resilience for constraint satisfaction problems,
with the goal of more precisely determining the boundary between NP-hardness
and the existence of efficient algorithms for resilient instances.  In
particular, we study $r$-resiliently $k$-colorable graphs, which are those
$k$-colorable graphs that remain $k$-colorable even after the addition of any
$r$ new edges.  We prove lower bounds on the NP-hardness of coloring
resiliently colorable graphs, and provide an algorithm that colors sufficiently
resilient graphs.  We also analyze the corresponding notion of resilience for $k$-SAT.
This notion of resilience suggests an array of open
questions for graph coloring and other combinatorial problems.
\end{abstract}

\section{Introduction and related work}

An important goal in studying NP-complete combinatorial problems is to find
precise boundaries between tractability and NP-hardness. This is often done by
adding constraints to the instances being considered until a polynomial time
algorithm is found.  For instance, while SAT is NP-hard, the restricted $2$-SAT
and XOR-SAT versions are decidable in polynomial time.  

In this paper we present a new angle for studying the boundary between
NP-hardness and tractability.  We informally define the resilience of a
constraint-based combinatorial problem and we focus on the case of resilient
graph colorability. Roughly speaking, a positive instance is resilient if it
remains a positive instance up to the addition of a constraint. For example, an
instance $G$ of Hamiltonian circuit would be ``$r$-resilient'' if $G$ has a
Hamiltonian circuit, and $G$ minus any $r$ edges \emph{still} has a Hamiltonian
circuit. In the case of coloring, we say a graph $G$ is $r$-resiliently
$k$-colorable if $G$ is $k$-colorable and will remain so even if any $r$ edges
are added. One would imagine that \emph{finding} a $k$-coloring in a very
resilient graph would be easy, as that instance is very ``far'' from being not
colorable.  And in general, one can pose the question: how resilient can
instances be and have the search problem still remain hard?\footnote{We focus
on the search versions of the problems because the decision version on
resilient instances induces the trivial ``yes'' answer.}

Most NP-hard problems have natural definitions of resiliency.  For instance,
resilient positive instances for optimization problems over graphs can be
defined as those that remain positive instances even up to the addition or
removal of any edge.  For satisfiability, we say a resilient instance is one
where variables can be ``fixed'' and the formula remains satisfiable. In
problems like set-cover, we could allow for the removal of a given number of
sets. Indeed, this can be seen as a general notion of resilience for adding
constraints in constraint satisfaction problems (CSPs), which have an extensive
literature~\cite{Kumar92}.\footnote{However, a resilience definition for
general CSPs is not immediate because the ability to add any constraint (e.g.,
the negation of an existing constraint) is too strong.}

Therefore we focus on a specific combinatorial problem, graph coloring.
Resilience is defined up to the addition of edges, and we first show that this
is an interesting notion: many famous, well studied graphs exhibit strong
resilience properties. Then, perhaps surprisingly, we prove that $3$-coloring a
$1$-resiliently 3-colorable graph is NP-hard -- that is, it is hard to color a
graph even when it is guaranteed to remain $3$-colorable under the addition of
any edge. Briefly, our reduction works by mapping positive instances of 3-SAT
to 1-resiliently 3-colorable graphs and negative instances to graphs of
chromatic number at least 4. An algorithm which can color 1-resiliently
3-colorable graphs can hence distinguish between the two. On the other hand, we
observe that $3$-resiliently $3$-colorable graphs have polynomial-time coloring
algorithms (leaving the case of 3-coloring $2$-resiliently $3$-colorable graphs
tantalizingly open). We also show that efficient algorithms exist for
$k$-coloring $\binom{k}{2}$-resiliently $k$-colorable graphs for all $k$, and
discuss the implications of our lower bounds. 

This paper is organized as follows. In the next two subsections we review the
literature on other notions of resilience and on graph coloring. In
Section~\ref{sec:resilient-sat} we characterize the resilience of boolean
satisfiability, which is used in our main theorem on 1-resilient 3-coloring. In
Section~\ref{sec:resilient-coloring-obs-bounds} we formally define the
resilient graph coloring problem and present preliminary upper and lower
bounds. In Section~\ref{sec:main-thm} we prove our main theorem, and in
Section~\ref{sec:open-problems} we discuss open problems.

\subsection{Related work on resilience}

There are related concepts of resilience in the literature. Perhaps the closest
in spirit is Bilu and Linial's notion of stability~\cite{BL12}.  Their notion
is restricted to problems over metric spaces; they argue that practical
instances often exhibit some degree of stability, which can make the problem
easier.  Their results on clustering stable instances have seen considerable
interest and have been substantially extended and
improved~\cite{ABS10,BL12,Rey11}.  Moreover, one can study TSP and other
optimization problems over metrics under the Bilu-Linial
assumption~\cite{MSSW11}.  A related notion of stability by Ackerman and
Ben-David~\cite{AckermanB09} for clustering yields efficient algorithms when
the data lies in Euclidian space.

Our notion of resilience, on the other hand, is most natural in the case when
the optimization problem has natural constraints, which can be fixed or
modified.  Our primary goal is also different -- we seek to more finely
delineate the boundary between tractability and hardness in a systematic way
across problems.

Property testing can also be viewed as involving resilience. Roughly speaking
property testing algorithms distinguish between combinatorial structures that
satisfy a property or are very far from satisfying it. These algorithms are
typically given access to a small sample depending on a parameter $\varepsilon$
alone. For graph property testing, as with resilience, the concept of being
$\varepsilon$-far from having a property involves the addition or removal of an
arbitrary set of at most $\varepsilon \binom{n}{2}$ edges from $G$. Our notion
of resilience is different in that we consider adding or removing a constant
number of constraints. More importantly, property testing is more concerned
with query complexity than with computational hardness.

\subsection{Previous work on coloring}

As our main results are on graph colorability, we review the relevant past
work. A graph $G$ is $k$-colorable if there is an assignment of $k$ distinct
colors to the vertices of $G$ so that no edge is monochromatic. Determining
whether $G$ is $k$-colorable is a classic an NP-hard problem~\cite{Karp72}.
Many attempts to simplify the problem, such as assuming planarity or bounded
degree, still result in NP-hardness~\cite{Dailey80}. A large body of work
surrounds positive and negative results for explicit families of graphs. The
list of families that are polynomial-time colorable includes triangle-free
planar graphs, perfect graphs and almost-perfect graphs, bounded tree- and
clique-width graphs, quadtrees, and various families of graphs defined by the
lack of an induced subgraph~\cite{Cai03,EBH99,HMM10,Ko03,KKTW01}.

With little progress on coloring general graphs, research has naturally turned
to approximation. In approximating the chromatic number of a general graph, the
first results were of Garey and Johnson, giving a performance guarantee of
$O(n/\log n)$ colors~\cite{Johnson74} and proving that it is NP-hard to
approximate chromatic number to within a constant factor less than
two~\cite{GJ76}. Further work improved this bound by logarithmic factors
\cite{BR90,Ha93}. In terms of lower bounds, Zuckerman~\cite{Zu07} derandomized
the PCP-based results of H{\aa}stad~\cite{Ha99} to prove the best known
approximability lower-bound to date, $O(n^{1-\varepsilon})$.

There has been much recent interest in coloring graphs which are already known
to be colorable while minimizing the number of colors used. For a 3-colorable
graph, Wigderson gave an algorithm using at most $O(n^{1/2})$
colors~\cite{Wi83}, which Blum improved to $\tilde{O}(n^{3/8})$~\cite{Blum94}.
A line of research improved this bound still further to
$o(n^{1/5})$~\cite{KawarabayashiT14}. Despite the difficulties in improving the
constant in the exponent, and as suggested by Arora~\cite{Arora11}, there is no
evidence that coloring a 3-colorable graph with as few as $O(\log n)$ colors is
hard.

On the other hand there are asymptotic and concrete lower bounds.
Khot~\cite{Khot01} proved that for sufficiently large $k$ it is NP-hard to
color a $k$-colorable graph with fewer than $k^{O(\log{k})}$ colors; this was
improved by Huang to $2^{\sqrt[3]{k}}$~\cite{Huang13}. It is also known that
for every constant $h$ there exists a sufficiently large $k$ such that coloring
a $k$-colorable graph with $hk$ colors is NP-hard~\cite{DMR06}.  In the
non-asymptotic case, Khanna, Linial, and Safra~\cite{KLS00} used the PCP
theorem to prove it is NP-hard to 4-color a 3-colorable graph, and more
generally to color a $k$ colorable graph with at most $k + 2\left \lfloor k/3
\right \rfloor - 1$ colors. Guruswami and Khanna  give an explicit reduction
for $k=3$~\cite{GuKh2000}. Assuming a variant of Khot's 2-to-1 conjecture,
Dinur et al. prove that distinguishing between chromatic number $K$ and $K'$ is
hard for constants $3 \leq K < K'$~\cite{DMR06}. This is the best conditional
lower bound we give in Section~\ref{sec:easy-bounds}, but it does not to our
knowledge imply Theorem~\ref{thm:3-1}.

Without large strides in approximate graph coloring, we need a new avenue to
approach the NP-hardness boundary. In this paper we consider the coloring
problem for a general family of graphs which we call \emph{resiliently
colorable}, in the sense that adding edges does not violate the given
colorability assumption.

\section{Resilient SAT}\label{sec:resilient-sat}

We begin by describing a resilient version of $k$-satisfiability, which is used
in proving our main result for resilient coloring in
Section~\ref{sec:main-thm}. 

\begin{prob}[resilient $k$-SAT]
A boolean formula $\varphi$ is \emph{$r$-resilient} if it is satisfiable
and remains satisfiable if any set of $r$ variables are fixed. We call
\emph{$r$-resilient $k$-SAT} the problem of finding a satisfying assignment for
an $r$-resiliently satisfiable $k$-CNF formula. Likewise, $r$-resilient CNF-SAT
is for $r$-resilient formulas in general CNF form.  
\end{prob}

The following lemma allows us to take problems that involve low (even zero)
resilience and blow them up to have large resilience and large clause size.

\begin{lemma}[blowing up]
\label{lemma:up}For all $r \geq 0$, $s \geq 1$, and $k \geq 3$, $r$-resilient
$k$-SAT reduces to $[(r+1)s-1]$-resilient $(sk)$-SAT in polynomial time.

\end{lemma}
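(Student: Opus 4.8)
The plan is to replace $\varphi$ by $s$ variable-disjoint renamed copies of itself and then take their logical OR, converting back to CNF by distribution. Concretely, given an $r$-resiliently satisfiable $k$-CNF formula $\varphi$ on variables $x_1,\dots,x_n$ with clauses $C_1,\dots,C_m$, I would introduce fresh variables $x_i^{(j)}$ for $1\le i\le n$ and $1\le j\le s$, let $\varphi^{(j)}$ be the copy of $\varphi$ obtained by renaming each $x_i$ to $x_i^{(j)}$, and output the CNF equivalent $\psi$ of $\varphi^{(1)}\vee\cdots\vee\varphi^{(s)}$: that is, $\psi$ has one clause $C_{a_1}^{(1)}\vee\cdots\vee C_{a_s}^{(s)}$ for each tuple $(a_1,\dots,a_s)\in\{1,\dots,m\}^s$. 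Each such clause has at most $sk$ literals (pad to exactly $sk$ if the definition of $(sk)$-SAT demands it, which is harmless since the copies share no variables), and $\psi$ has $m^s$ clauses, which is polynomial in $|\varphi|$ because $s$ is a fixed constant.

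For the forward direction of the reduction, I would show that any satisfying assignment $\beta$ of $\psi$ yields one of $\varphi$: some copy $\varphi^{(j_0)}$ must be satisfied by $\beta$, since otherwise each $\varphi^{(j)}$ would contain a clause falsified by $\beta$, and ORing one such clause per copy produces a clause of $\psi$ falsified by $\beta$. One locates such a $j_0$ by checking all $s$ copies and reads off $\alpha(x_i) := \beta(x_i^{(j_0)})$, all in polynomial time.

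For the resilience claim, first note $\psi$ is satisfiable: copy any satisfying assignment of $\varphi$ across all $s$ blocks, which satisfies every $\varphi^{(j)}$ and hence every clause of $\psi$. The crux is then a pigeonhole step: after fixing any $(r+1)s-1$ variables of $\psi$ with arbitrary values, some block $j_0$ receives at most $r$ fixed variables, because if every block received at least $r+1$ the total would be at least $(r+1)s > (r+1)s-1$. The $r$-resilience of $\varphi$, inherited by $\varphi^{(j_0)}$, then extends the induced partial assignment on block $j_0$ to a full satisfying assignment of $\varphi^{(j_0)}$; assigning the remaining variables of the other blocks arbitrarily (respecting the fixings) satisfies $\psi$, since every clause of $\psi$ contains a clause of the now-satisfied $\varphi^{(j_0)}$. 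As the fixed set was arbitrary, $\psi$ is $[(r+1)s-1]$-resilient.

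The step I expect to need the most care is making the parameters line up exactly in the pigeonhole argument: $(r+1)s-1$ is the largest number of fixings that still forces some block down to at most $r$ fixed variables, which is precisely the regime where $\varphi$'s resilience can be invoked, so subtracting $1$ is essential and the bound is tight. The only other thing to keep in mind is that $\psi$ has $m^s$ clauses, so $s$ genuinely must be treated as a constant (as in the statement) for the reduction to run in polynomial time; apart from that, the argument is routine.
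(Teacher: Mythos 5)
Your proof is correct and takes essentially the same approach as the paper: take $s$ variable-disjoint copies of $\varphi$, form their disjunction, distribute into $(sk)$-CNF (polynomial for fixed $s$), and use the pigeonhole principle to find a block with at most $r$ fixed variables. You are somewhat more careful than the paper in spelling out the recovery of a satisfying assignment for $\varphi$ from one for $\psi$ and in noting that clauses may need padding, but these are presentational refinements rather than a different argument.
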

\begin{proof}

Let $\varphi$ be an $r$-resilient $k$-SAT formula. For each $i$, let
$\varphi^i$ denote a copy of $\varphi$ with a fresh set of variables. Construct
$\psi = \bigvee_{i = 1}^s \varphi^i$. The formula $\psi$ is clearly equivalent
to $\varphi$, and by distributing the terms we can transform $\psi$ into
$(sk)$-CNF form in time $O(n^s)$. We claim that $\psi$ is $[(r+1)s -
1]$-resilient. If fewer than $(r+1)s$ variables are fixed, then by the
pigeonhole principle one of the $s$ sets of variables has at most $r$ fixed
variables. Suppose this is the set for $\varphi^1$. As $\varphi$ is
$r$-resilient, $\varphi^1$ is satisfiable and hence so is $\psi$. \hfill
$\square$ 

\end{proof}

As a consequence of the blowing up lemma for $r=0, s=2, k=3$, 1-resilient 6-SAT
is NP-hard (we reduce from this in our main coloring lower bound). Moreover, a
slight modification of the proof shows that $r$-resilient CNF-SAT is NP-hard
for all $r \geq 0$. The next lemma allows us to reduce in the other direction,
shrinking down the resilience and clause sizes.

\begin{lemma}[shrinking down]
\label{lemma:down}
Let $r \geq 1$,  $k \geq 2$, and $q = \min(r, \lfloor k/2 \rfloor)$. Then
$r$-resilient $k$-SAT reduces to $q$-resilient $(\lceil \frac{k}{2} \rceil +
1)$-SAT in polynomial time.

\end{lemma}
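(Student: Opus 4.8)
The plan is to apply the familiar clause-splitting reduction (the one that reduces $k$-SAT to $3$-SAT, used here a single time to roughly halve the clause width) and then to verify that it preserves enough resilience. Given an $r$-resilient $k$-CNF formula $\varphi$ (which we may assume, after trivial preprocessing, has no repeated or complementary literals within a clause), for each clause $C = (\ell_1 \vee \cdots \vee \ell_k)$ introduce a fresh variable $z_C$ and replace $C$ by $C^{+} = (\ell_1 \vee \cdots \vee \ell_{\lceil k/2 \rceil} \vee z_C)$ and $C^{-} = (\neg z_C \vee \ell_{\lceil k/2 \rceil + 1} \vee \cdots \vee \ell_k)$, of widths $\lceil k/2 \rceil + 1$ and $\lfloor k/2 \rfloor + 1 \le \lceil k/2 \rceil + 1$. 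Call the resulting formula $\psi$. It is immediate that $\psi$ has polynomial size, that any satisfying assignment of $\psi$ restricts to one of $\varphi$, and that any satisfying assignment of $\varphi$ extends to one of $\psi$ (set $z_C$ according to which half of $C$ is satisfied); so the only real content is that $\psi$ is $q$-resilient.

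To see that, fix a set $S$ of at most $q$ variables of $\psi$ together with values, say $a$ of them original variables of $\varphi$ and $b = |S| - a$ of them auxiliary variables $z_{C_1}, \dots, z_{C_b}$. The key point is that fixing $z_{C_i}$ satisfies one of $C_i^{+}, C_i^{-}$ outright and collapses the other into a sub-clause $D_i \subseteq C_i$ with $|D_i| \ge \lfloor k/2 \rfloor$; so it suffices to produce a satisfying assignment of $\varphi$ that agrees with the $a$ direct fixings and in addition satisfies every $D_i$. I would build such a partial assignment greedily: initialize it with the $a$ direct fixings, and for $i = 1,\dots,b$, if $D_i$ is not already satisfied, pick a literal of $D_i$ whose variable is still unassigned and set it true. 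This never gets stuck, since just before step $i$ at most $a + (i-1) \le a + b - 1 \le q - 1 < \lfloor k/2 \rfloor \le |D_i|$ variables are assigned, so $D_i$ still contains an unassigned (hence usable) literal. The outcome is a consistent partial assignment on at most $a+b \le q \le r$ variables that forces all the $D_i$.

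Now invoke the resilience of $\varphi$: since at most $r$ variables are fixed, $\varphi$ has a satisfying assignment $\beta$ extending this partial assignment. Extend $\beta$ to the auxiliary variables exactly as in the easy direction: for $z_{C_i} \in S$ its value is already forced and, by construction, $\beta$ satisfies the corresponding $D_i$, so both $C_i^{+}$ and $C_i^{-}$ hold; for $z_C \notin S$, set $z_C$ to agree with whichever half of $C$ is satisfied by $\beta$. This gives a satisfying assignment of $\psi$ consistent with $S$, establishing $q$-resilience and finishing the reduction.

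The only place the parameters actually bite — and hence the part needing the most care — is the bookkeeping around fixed auxiliary variables: a fixed $z_C$ is not itself a fixing of $\varphi$, and a careless translation could produce a conflict. The two inequalities defining $q = \min(r, \lfloor k/2 \rfloor)$ are precisely what rescue us: $q \le \lfloor k/2 \rfloor$ makes each collapsed sub-clause $D_i$ wide enough to avoid all of the at most $q-1$ other imposed assignments in the greedy step, and $q \le r$ keeps the total number of induced fixings of $\varphi$ within its resilience budget. I would also sanity-check the degenerate cases ($k=2$, $r=1$) and the convention on whether ``$(\lceil k/2 \rceil + 1)$-SAT'' permits shorter clauses or requires padding them out, but neither affects the core argument.
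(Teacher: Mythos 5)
Your proof is correct and uses exactly the same clause-splitting construction and the same high-level idea (translate each fixed auxiliary $z_C$ into a fixed original literal in the surviving half-clause, then invoke the $r$-resilience of $\varphi$) as the paper's proof. Your greedy bookkeeping that justifies why a fresh literal in each collapsed sub-clause $D_i$ is always available — and hence why the bound $q \le \lfloor k/2 \rfloor$ is exactly what is needed — is spelled out more carefully than in the paper, which asserts the corresponding step informally.
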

\begin{proof}

For ease of notation, we prove the case where $k$ is even. For a clause $C =
\bigvee_{i=1}^k x_i$, denote by $C[:k/2]$ the sub-clause consisting of the
first half of the literals of $C$, specifically $\bigvee_{i=1}^{k/2} x_i$.
Similarly denote by $C[k/2:]$ the second half of $C$. Now given a $k$-SAT
formula $\varphi = \bigwedge_{j=1}^k C_j$, we construct a $(\frac{k}{2} +
1)$-SAT formula $\psi$ by the following. For each $j$ introduce a new variable
$z_j$, and  define

\[
   \psi = \bigwedge_{j=1}^k (C_j[:k/2] \vee z_j) \wedge (C_j[k/2:] \vee
\overline{z_j}) 
\]

The formulas $\varphi$ and $\psi$ are logically equivalent, and we claim $\psi$
is $q$-resilient. Indeed, if some of the original set of variables are fixed
there is no problem, and each $z_i$ which is fixed corresponds to a choice of
whether the literal which will satisfy $C_j$ comes from the first or the second
half. Even stronger, we can arbitrarily \emph{pick} another literal in the
correct half and fix its variable so as to satisfy the clause. The
$r$-resilience of $\varphi$ guarantees the ability to do this for up to $r$ of
the $z_i$. But with the observation that there are no $l$-resilient $l$-SAT
formulas, we cannot get $k/2 + 1$ resilience when $r > k/2$,
giving the definition of $q$.
\hfill $\square$
\end{proof}

Combining the blowing up and shrinking down lemmas, we get a tidy
characterization: $r$-resilient $k$-SAT is either NP-hard or vacuously trivial.

\begin{thm}

For all $k \geq 3$, $0 \leq r < k$, $r$-resilient $k$-SAT is NP-hard.

\end{thm}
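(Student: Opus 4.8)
The plan is to combine the two lemmas already proved to cover all the cases in the range $0 \leq r < k$, $k \geq 3$. The strategy is to start from an instance whose NP-hardness we already know—ordinary $3$-SAT, which is $0$-resilient $3$-SAT—and use \emph{blowing up} (Lemma~\ref{lemma:up}) to push the resilience and clause size as high as we like, then use \emph{shrinking down} (Lemma~\ref{lemma:down}) to come back down to the exact target pair $(r, k)$. Concretely, given target values $r$ and $k$ with $0 \leq r < k$, I would first apply the blowing-up lemma to $0$-resilient $3$-SAT with a suitable blow-up factor $s$ to obtain an NP-hard instance of $[(s-1)]$-resilient $(3s)$-SAT with resilience and clause-width comfortably above the targets, and then apply the shrinking-down lemma (possibly iterated) to descend to $r$-resilient $k$-SAT.

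The key arithmetic step is to check that the two operations' parameter transformations compose to hit $(r,k)$ exactly. Blowing up sends $(r_0, k_0) \mapsto ((r_0+1)s - 1,\, sk_0)$; shrinking down sends $(r', k') \mapsto (\min(r', \lfloor k'/2 \rfloor),\, \lceil k'/2 \rceil + 1)$. One clean route: iterating shrinking-down from a very wide clause drives the clause width down toward small values, and each step roughly halves $k'$ while the term $+1$ accumulates; one should verify that by choosing the starting width appropriately one can land on any prescribed $k \geq 3$, and that along the descent the resilience parameter—which is capped at $\lfloor k'/2 \rfloor$ at each stage but otherwise preserved—can be made to come out as exactly $r$ for any $r$ in the allowed range $0 \leq r < k$. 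Since blowing up lets us start with resilience as large as we want relative to clause width, the cap $\lfloor k'/2 \rfloor$ is the only real constraint, and $r < k$ is exactly what makes $r \leq \lfloor k/2 \rfloor$ achievable after one final calibrating step (noting $\lfloor k/2 \rfloor$ may be less than $k-1$, so a little care with how the last shrink is set up is needed, e.g. not shrinking all the way or choosing $s$ so the penultimate width is correct).

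The main obstacle I anticipate is purely bookkeeping: making the composition of ceilings, floors, and the $\min$ land precisely on an \emph{arbitrary} target $(r,k)$ rather than just ``something close,'' and in particular handling the parity cases ($k$ even versus odd) since Lemma~\ref{lemma:down}'s statement involves both $\lceil k/2 \rceil$ and $\lfloor k/2 \rfloor$. I would organize this by first settling the clause-width coordinate—showing every $k \geq 3$ is reachable as an output width of an appropriate blow-up-then-shrink chain starting from width $3$—and only then arguing that the resilience coordinate can be independently set to any value strictly below $k$, using that blowing up inflates resilience without bound so that no information is lost until the final shrink imposes the $\lfloor k/2 \rfloor$ ceiling. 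A subtle point worth flagging explicitly: the remark after Lemma~\ref{lemma:down} that ``there are no $l$-resilient $l$-SAT formulas'' shows the bound $r < k$ is not merely an artifact of the reductions but genuinely necessary, so the theorem's hypothesis is tight and no case is being swept under the rug.
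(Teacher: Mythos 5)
Your high-level plan matches the paper's proof: start from $3$-SAT, blow up to get large resilience and width, then shrink down to the target $(r,k)$. But there is a genuine gap at exactly the point you wave off as ``purely bookkeeping.'' The composition does \emph{not} automatically land where you want. The paper itself flags a concrete failure: aiming for $(r,k)=(9,10)$, blowing up with $s=10$ gives $(9,30)$, and one application of shrinking down gives width $\lceil 30/2\rceil+1=16$ with resilience cap $\lfloor 16/2\rfloor = 8 < 9$, so the resilience is irrecoverably lost mid-chain. The paper's fix has two ingredients you don't supply. First, a monotonicity observation (increasing $k$ or decreasing $r$ can only make the problem harder) reduces the whole theorem to the single diagonal family $(k-1,k)$, which greatly simplifies the case analysis; your plan of hitting an arbitrary $(r,k)$ directly is unnecessarily hard. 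Second, and more importantly, after blowing up the paper \emph{pads} the clause width by adding $x$ dummy literals per clause (``artificially increasing $k$'') and solves $\lceil (3s+x)/2\rceil + 1 = 2(s-1)$ for $x$, getting $x = s-6$, which works for $r \geq 5$ and leaves a finite list ($r=2,3,4$) to handle by hand. Your phrase ``choosing $s$ so the penultimate width is correct'' gestures at having an extra degree of freedom, but you never verify that varying $s$ alone suffices; in fact what makes the arithmetic close cleanly is the second free parameter $x$. Also, your claim that ``$r < k$ is exactly what makes $r \le \lfloor k/2\rfloor$ achievable'' is off: $r=k-1 < k$ but $k-1 > \lfloor k/2\rfloor$ for $k\ge 3$. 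The cap at the final shrink step is $\lfloor k'/2\rfloor$ with $k'$ the \emph{input} width (equal to $k-1$ when $k'=2k-2$), not $\lfloor k/2\rfloor$; this conflation is precisely the kind of off-by-one that the ``bookkeeping'' hides, and it is where the proof actually lives.
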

\begin{proof}

We note that increasing $k$ or decreasing $r$ (while leaving the other
parameter fixed) cannot make $r$-resilient $k$-SAT easier, so it suffices to
reduce from 3-SAT to $(k-1)$-resilient $k$-SAT for all $k \geq 3$. For any $r$
we can blow up from 3-SAT to $r$-resilient $3(r+1)$-SAT by setting $s = r+1$ in
the blowing up lemma. We want to iteratively apply the shrinking down lemma
until the clause size is $s$. If we write $s_0 = 3s$ and $s_i = \lceil s_i/2
\rceil + 1$, we would need that for some $m$, $s_m = s$ and that for each $1
\leq j < m$, the inequality $\lfloor s_j / 2 \rfloor \geq r = s-1$ holds.

Unfortunately this is not always true. For example, if $s = 10$ then $s_1 =
16$ and $16/2 < 9$, so we cannot continue. However, we can avoid this for
sufficiently large $r$ by artificially increasing $k$ after blowing up. Indeed,
we just need to find some $x \geq 0$ for which
$
   a_1 = \left \lceil \frac{3s+x}{2} \right \rceil + 1 = 2(s-1).
$
And we can pick $x = s - 6 = r - 5$, which works for all $r \geq 5$. For $r =
2,3,4$, we can check by hand that one can find an $x$ that works.\footnote{The
difference is that for $r \geq 5$ we can get what we need with only two
iterations, but for smaller $r$ we require three steps.} For $r=2$ we can
start from 2-resilient 9-SAT; for $r=3$ we can start from 16-SAT; and for $r=4$
we can start from 24-SAT.
\hfill $\square$
\end{proof}

\section{Resilient graph coloring and preliminary bounds}
\label{sec:resilient-coloring-obs-bounds}

In contrast to satisfiability, resilient graph coloring has a more interesting
hardness boundary, and it is not uncommon for graphs to have relatively high
resilience. In this section we present some preliminary bounds.

\subsection{Problem definition and remarks}

\begin{prob}[resilient coloring] 
A graph $G$ is called \emph{$r$-resiliently $k$-colorable} if $G$ remains
$k$-colorable under the addition of any set of $r$ new edges.
\end{prob}

We argue that this notion is not trivial by showing the resilience properties
of some classic graphs. These were determined by exhaustive computer search.
The Petersen graph is 2-resiliently 3-colorable. The D{\"u}rer graph is
1-resiliently 3-colorable (but not 2-resilient) and 4-resiliently 4-colorable
(but not 5-resilient). The Gr{\"o}tzsch graph is 4-resiliently 4-colorable (but
not 5-resilient). The Chv{\'a}tal graph is 3-resiliently 4-colorable (but not
4-resilient).

There are a few interesting constructions to build intuition about resilient
graphs. First, it is clear that every $k$-colorable graph is 1-resiliently
$(k+1)$-colorable (just add one new color for the additional edge), but for all
$k > 2$ there exist $k$-colorable graphs which are not 2-resiliently
$(k+1)$-colorable. Simply remove two disjoint edges from the complete graph on
$k+2$ vertices. A slight generalization of this argument provides examples of
graphs which are $\left \lfloor (k+1)/2\right \rfloor$-colorable but not $\left
\lfloor (k+1)/2 \right \rfloor$-resiliently $k$-colorable for $k \geq 3$.  On
the other hand, every $\left \lfloor (k+1)/2\right \rfloor$-colorable graph is
$(\left \lfloor (k+1)/2 \right \rfloor-1)$-resiliently $k$-colorable, since
$r$-resiliently $k$-colorable graphs are $(r+m)$-resiliently $(k+m)$-colorable
for all $m \geq 0$ (add one new color for each added edge). 

One expects high resilience in a $k$-colorable graph to reduce the number of
colors required to color it. While this may be true for super-linear
resilience, there are easy examples of $(k-1)$-resiliently $k$-colorable graphs
which are $k$-chromatic. For instance, add an isolated vertex to the complete
graph on $k$ vertices. 

\subsection{Observations}

We are primarily interested in the complexity of coloring resilient graphs, and
so we pose the question: for which values of $k,r$ does the task of
$k$-coloring an $r$-resiliently $k$-colorable graph admit an efficient
algorithm? The following observations aid us in the classification of such
pairs, which is displayed in Figure~\ref{fig:classification}.

\begin{obs}\label{obs:horizontal}
An $r$-resiliently $k$-colorable graph is $r'$-resiliently $k$-colorable for
any $r' \leq r$. Hence, if $k$-coloring is in P for $r$-resiliently
$k$-colorable graphs, then it is for $s$-resiliently $k$-colorable graphs for
all $s \geq r$.  Conversely, if $k$-coloring is NP-hard for $r$-resiliently
$k$-colorable graphs, then it is for $s$-resiliently $k$-colorable graphs for
all $s \leq r$. 
\end{obs}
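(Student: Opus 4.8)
The plan is to establish the three clauses in order, with the first doing all the real work and the other two following formally. For the first clause, let $G$ be $r$-resiliently $k$-colorable and fix $r' \le r$. Given any set $S$ of $r'$ new edges, I would extend $S$ to a set $S' \supseteq S$ of new edges with $|S'| \le r$ by throwing in arbitrary non-edges of $G$ (taking $S'$ to be $S$ together with all remaining non-edges in the degenerate case that $G$ has fewer than $r$ of them). By hypothesis $G + S'$ is $k$-colorable, and since $G + S$ is a subgraph of $G + S'$ on the same vertex set, any proper $k$-coloring of $G + S'$ is also a proper $k$-coloring of $G + S$. As $S$ was an arbitrary set of $r'$ new edges, $G$ is $r'$-resiliently $k$-colorable. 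The only point requiring care is that degenerate case, which is handled by reading ``addition of any set of $r$ new edges'' as ``any set of at most $r$ new edges''; this is the natural convention and changes nothing elsewhere in the paper.

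For the second clause, suppose $k$-coloring is in P on $r$-resiliently $k$-colorable graphs, witnessed by a polynomial-time algorithm $\mathcal{A}$, and let $s \ge r$. By the first clause, every $s$-resiliently $k$-colorable graph is in particular $r$-resiliently $k$-colorable, so running $\mathcal{A}$ on such a graph returns a proper $k$-coloring in polynomial time; hence $k$-coloring is in P on $s$-resiliently $k$-colorable graphs. The third clause is the same monotonicity packaged for hardness: if $k$-coloring is NP-hard on $r$-resiliently $k$-colorable graphs and $s \le r$, then because the class of $r$-resiliently $k$-colorable graphs is contained in the class of $s$-resiliently $k$-colorable graphs (again by the first clause), any algorithm solving $k$-coloring on the latter class also solves it on the former; composing the hardness reduction for the $r$-resilient problem with this containment of instance classes yields a reduction to the $s$-resilient problem, so it too is NP-hard. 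I expect no real obstacle here — the entire content is the monotonicity of $k$-colorability under edge deletion, and the remainder is just bookkeeping about how restricting a search problem to a subclass of instances can only make it harder.
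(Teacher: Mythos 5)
Your proof is correct and is exactly the standard argument the paper leaves implicit (the observation is stated without proof): extend the smaller edge set to one of size at most $r$, invoke $r$-resilience, and note that $k$-colorability is inherited by subgraphs, with the P/NP-hard clauses then following from the resulting containment of instance classes. Your handling of the degenerate case via the ``at most $r$ new edges'' reading is the right convention and is consistent with how resilience is used throughout the paper.
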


Hence, in Figure~\ref{fig:classification} if a cell is in P, so are all of the
cells to its right; and if a cell is NP-hard, so are all of the cells to its
left. 

\begin{obs}\label{obs:vertical}
If $k$-coloring is in P for $r$-resiliently $k$-colorable graphs, then
$k'$-coloring $r$-resiliently $k'$-colorable graphs is in P for all $k' \leq
k$. Similarly, if $k$-coloring is in NP-hard for $r$-resiliently $k$-colorable
graphs, then $k'$-coloring is NP-hard for $r$-resiliently $k'$-colorable graphs
for all $k' \geq k$.
\end{obs}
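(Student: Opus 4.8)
The plan is to prove both implications by a padding (vertex-gadget) argument that trades one color for a clique. Fix $k' \le k$; I want to show that $k'$-coloring on $r$-resiliently $k'$-colorable graphs reduces (in polynomial time, and in the complexity-preserving direction) to $k$-coloring on $r$-resiliently $k$-colorable graphs. First I would take a graph $G$ that is $r$-resiliently $k'$-colorable and form $G'$ by adding $k - k'$ new vertices, making them pairwise adjacent, and joining each of them to \emph{every} vertex of $G$. The new vertices form a clique $K_{k-k'}$ that is completely joined to $G$, so in any $k$-coloring of $G'$ those $k-k'$ vertices must use $k-k'$ distinct colors, none of which may appear on any vertex of $G$; hence $k$-colorings of $G'$ correspond exactly to $k'$-colorings of $G$ (on the $G$-part) together with a fixed palette choice on the clique. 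This gives the desired reduction between the \emph{search} problems provided I check that $G'$ inherits the resilience.

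The key step is the resilience bookkeeping. I would argue that $G'$ is $r$-resiliently $k$-colorable: given any $r$ new edges added to $G'$, split them into those landing inside $V(G)$ and those touching the clique. Edges touching the clique are already ``present'' in the sense that the clique is fully joined to everything, so adding them changes nothing; an added edge between two clique vertices is likewise already there. So every added edge either is redundant or lies within $V(G)$, i.e.\ effectively we have added at most $r$ edges to $G$. Since $G$ is $r$-resiliently $k'$-colorable, the modified $G$ is still $k'$-colorable, and extending by the $k-k'$ reserved colors on the clique yields a $k$-coloring of the modified $G'$. Thus $G'$ is $r$-resiliently $k$-colorable, and a $k$-coloring algorithm for such graphs, applied to $G'$, recovers a $k'$-coloring of $G$ after discarding the clique colors. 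This proves the first (P $\Rightarrow$ P) direction for $k' \le k$, and the contrapositive gives the NP-hardness direction: if $k'$-coloring $r$-resiliently $k'$-colorable graphs is NP-hard then so is $k$-coloring $r$-resiliently $k$-colorable graphs, for $k \ge k'$.

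The only real obstacle is making sure the resilience parameter is \emph{exactly} preserved rather than degraded, and that is handled precisely by the observation above that every edge one could add incident to the fully-joined clique is already present, so none of the $r$ budget is ``wasted'' there and none of it helps the adversary there either --- the adversary's best move is to spend all $r$ edges inside $V(G)$, which is exactly the regime controlled by the hypothesis on $G$. A secondary point to state carefully is that the reduction is between search problems: the map $G \mapsto G'$ is clearly polynomial-time, and recovering the answer is just the restriction of the output coloring to $V(G)$, which is $k'$-valid because the clique monopolizes the other $k-k'$ colors. With these checks the observation follows.
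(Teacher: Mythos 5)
Your proof is correct and takes essentially the same approach as the paper: the paper adds a single new vertex with complete incidence to $G$ (implicitly iterating $k-k'$ times), while you add the clique $K_{k-k'}$ joined to $G$ all at once, and the resilience-preservation argument is the same. Your explicit bookkeeping --- that every potential new edge incident to the fully-joined clique already exists, so the adversary's $r$-edge budget is confined to $V(G)$ --- is a helpful elaboration of a step the paper states without justification.
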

\begin{proof}
If $G$ is $r$-resiliently $k$-colorable, then we construct $G'$ by adding a
new vertex $v$ with complete incidence to $G$. Then $G'$ is $r$-resiliently
$(k+1)$-colorable, and an algorithm to color $G'$ can be used to color $G$.
\hfill $\square$
\end{proof}

Observation~\ref{obs:vertical} yields the rule that if a cell is in P, so are
all of the cells above it; if a cell is NP-hard, so are the cells below it.
More generally, we have the following observation which allows us to apply
known bounds.

\begin{obs}\label{obs:function-bound}
If it is NP-hard to $f(k)$-color a $k$-colorable graph, then it is NP-hard to
$f(k)$-color an $(f(k)-k)$-resiliently $f(k)$-colorable graph.
\end{obs}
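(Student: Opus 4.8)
The plan is to prove Observation~\ref{obs:function-bound} by a direct reduction from the problem of $f(k)$-coloring a $k$-colorable graph to the problem of $f(k)$-coloring an $(f(k)-k)$-resiliently $f(k)$-colorable graph. The key structural fact I would use is the one already noted in the text just above Observation~\ref{obs:horizontal}: an $r$-resiliently $k$-colorable graph is $(r+m)$-resiliently $(k+m)$-colorable for every $m \geq 0$, obtained by reserving one fresh color for each added edge. The obstacle is really just bookkeeping with the parameters, so let me first fix notation: write $c = f(k)$, and suppose we are handed a graph $G$ that is promised to be $k$-colorable, with the goal of producing a $c$-coloring under the assumption that doing so is NP-hard.

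First I would apply the lift-by-$m$ construction with $m = c - k = f(k) - k$ (which is $\geq 0$ since any useful hardness statement has $f(k) \geq k$). Concretely: since $G$ is $k$-colorable, it is trivially $0$-resiliently $k$-colorable, and hence by the quoted fact it is $(0+m)$-resiliently $(k+m)$-colorable, i.e.\ $(f(k)-k)$-resiliently $f(k)$-colorable. Crucially, no modification of $G$ itself is needed — $G$ already \emph{is} the instance of the resilient problem; the reduction is the identity map on graphs, and all that changes is the promise we attach to it. This is what makes the reduction polynomial-time (indeed trivial).

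Next I would close the loop on the reduction. Given a hypothetical polynomial-time algorithm $A$ that $f(k)$-colors any $(f(k)-k)$-resiliently $f(k)$-colorable graph, run $A$ on $G$. By the previous paragraph $G$ satisfies $A$'s input promise, so $A$ returns a valid $f(k)$-coloring of $G$. But $G$ was an arbitrary $k$-colorable graph, and producing an $f(k)$-coloring of such a $G$ is NP-hard by assumption; hence $A$ cannot exist unless $\mathrm{P} = \mathrm{NP}$, which is exactly the claim that $f(k)$-coloring $(f(k)-k)$-resiliently $f(k)$-colorable graphs is NP-hard.

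The only thing resembling a difficulty is making sure the resilience parameter comes out to exactly $f(k)-k$ rather than something smaller or larger, and checking the edge cases where $f(k) - k = 0$ (in which case the statement degenerates to the trivial one that $f(k)$-coloring $k=f(k)$-colorable graphs is NP-hard whenever it is assumed to be) or where $f(k) < k$ (in which case the hypothesis is vacuous because an $f(k)$-colorable graph with $f(k) < k$ would already be $k$-colorable and the hardness assumption is about coloring with fewer colors than $k$, so there is nothing to prove). I would state these briefly and move on, since the substance is entirely the $m = f(k) - k$ instantiation of the already-established lifting identity. This also immediately explains how the observation lets one import the Khanna--Linial--Safra and Dinur et al.\ bounds into the resilience table of Figure~\ref{fig:classification}.
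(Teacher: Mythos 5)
Your proof is correct and matches the paper's approach exactly: the paper likewise invokes the fact that an $r$-resiliently $k$-colorable graph is $(r+m)$-resiliently $(k+m)$-colorable, with $r=0$ and $m=f(k)-k$, so that every $k$-colorable graph already satisfies the resilient promise and the reduction is the identity. Your write-up just spells out the bookkeeping and edge cases more explicitly than the paper's one-line remark.
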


This observation is used in Propositions~\ref{propn:4-1}
and~\ref{propn:2-to-1-diagonal}, and follows from the fact that an
$r$-resiliently $k$-colorable graph is $(r+m)$-resiliently $(k+m)$-colorable
for all $m \geq 0$ (here $r = 0, m = f(k) - k$). 

\begin{figure}
\centering
\begin{subfigure}{.65\textwidth}
  \centering
  \scalebox{0.42}{\includegraphics{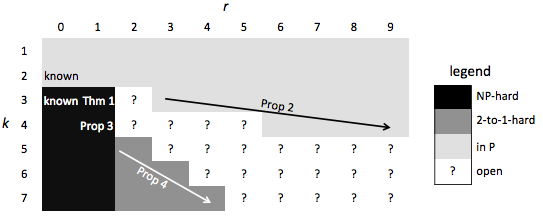}}
\end{subfigure}%
\begin{subfigure}{.35\textwidth}
  \centering
  \scalebox{0.30}{\includegraphics{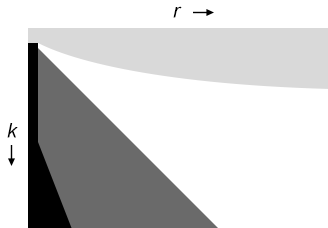}}
\end{subfigure}
\caption{The  classification of the complexity of $k$-coloring
$r$-resiliently $k$-colorable graphs. Left: the explicit classification for
small $k, r$. Right: a zoomed-out view of the same table, with the 
NP-hard (black) region added by Proposition~\ref{propn:asymptotic}.}
\label{fig:classification} \end{figure}

\subsection{Upper and lower bounds} \label{sec:easy-bounds}
In this section we provide a simple upper bound on the complexity of coloring
resilient graphs, we apply known results to show that 4-coloring a
1-resiliently 4-colorable graph is NP-hard, and we give the conditional
hardness of $k$-coloring $(k-3)$-resiliently $k$-colorable graphs for all $k
\geq 3$.  This last result follows from the work of Dinur et al., and depends a
variant of Khot's 2-to-1 conjecture~\cite{DMR06}; a problem is called
\emph{2-to-1-hard} if it is NP-hard assuming this conjecture holds. Finally,
applying the result of Huang~\cite{Huang13}, we give an asymptotic lower bound. 

All our results on coloring are displayed in Figure~\ref{fig:classification}.
To explain Figure~\ref{fig:classification} more explicitly,
Proposition~\ref{propn:k-choose-2-bound} gives an upper bound for $r=
\binom{k}{2}$, and Proposition~\ref{propn:4-1} gives hardness of the cell
$(4,1)$ and its consequences. Proposition~\ref{propn:2-to-1-diagonal} provides
the conditional lower bound, and Theorem~\ref{thm:3-1} gives the hardness of
the cell $(3,1)$. Proposition~\ref{propn:asymptotic} provides an NP-hardness
result.

\begin{propn}\label{propn:k-choose-2-bound}
There is an efficient algorithm for $k$-coloring $\binom{k}{2}$-resiliently
$k$-colorable graphs.
\end{propn}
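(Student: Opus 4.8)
The plan is to show that $\binom{k}{2}$-resilient $k$-colorability forces the maximum degree of the graph to be at most $k-1$; once this is established, a trivial greedy procedure produces a $k$-coloring.

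The first and crucial step is a local sparsity lemma: in any $\binom{k}{2}$-resiliently $k$-colorable graph $G$, every set $T$ of $k+1$ vertices spans at most $k-1$ edges. I would prove this by a completion argument. Suppose some $(k+1)$-set $T$ spanned at least $k$ edges. Then the number of non-edges inside $T$ is at most $\binom{k+1}{2} - k = \binom{k}{2}$, so we may add exactly those missing edges to turn $G[T]$ into a copy of $K_{k+1}$. This adds at most $\binom{k}{2}$ new edges, so by the resilience hypothesis the augmented graph is still $k$-colorable; but it contains $K_{k+1}$ and hence has chromatic number at least $k+1$, a contradiction. (The argument is vacuous but harmless when $G$ has at most $k$ vertices.)

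Next I would deduce the degree bound. If some vertex $v$ had $k$ or more neighbors $u_1,\dots,u_k$, then the set $\{v,u_1,\dots,u_k\}$ has $k+1$ vertices and contains at least the $k$ edges $vu_1,\dots,vu_k$, contradicting the sparsity lemma. Hence $\Delta(G) \le k-1$. Finally, any graph with maximum degree at most $k-1$ is properly $k$-colorable by the greedy algorithm: process the vertices in an arbitrary order and, when coloring a vertex, at most $k-1$ of the $k$ colors are blocked by its already-colored neighbors, so a free color always remains. This runs in linear time, which gives the efficient algorithm claimed.

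The only real content is the completion argument of the first step — spotting that $\binom{k}{2}$ is exactly the number of edges one is allowed to add to promote a dense $(k+1)$-set to a forbidden clique; after that, the degree bound and the greedy coloring are routine, the sole point of care being the arithmetic identity $\binom{k+1}{2}-k=\binom{k}{2}$ and the trivial handling of graphs on few vertices.
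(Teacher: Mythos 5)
Your proposal is correct and matches the paper's proof in all essentials: both establish that $\binom{k}{2}$-resilience forces maximum degree at most $k-1$ by observing that a vertex of degree $k$ together with $k$ of its neighbors can be completed to a $K_{k+1}$ by adding at most $\binom{k}{2}$ edges, and both then invoke greedy coloring. Your intermediate ``local sparsity lemma'' about arbitrary $(k+1)$-sets is a mild generalization, but the paper goes directly to the degree bound via the same completion argument.
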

\begin{proof}
If $G$ is $\binom{k}{2}$-resiliently $k$-colorable, then no vertex may have
degree $ \geq k$. For if $v$ is such a vertex, one may add complete incidence to
any choice of $k$ vertices in the neighborhood of $v$ to get $K_{k+1}$. Finally,
graphs with bounded degree $k-1$ are greedily $k$-colorable. 
\hfill $\square$
\end{proof}

\begin{propn}\label{propn:4-1}
4-coloring a 1-resiliently 4-colorable graph is NP-hard.
\end{propn}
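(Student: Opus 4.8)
The plan is to derive this as an immediate corollary of Observation~\ref{obs:function-bound} together with the known NP-hardness of 4-coloring a 3-colorable graph, due to Khanna, Linial, and Safra~\cite{KLS00} (an explicit reduction is also available from Guruswami and Khanna~\cite{GuKh2000}).

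First I would recall that hardness result: there is no polynomial-time algorithm that 4-colors an arbitrary 3-colorable graph unless P${}={}$NP. Next I would invoke the embedding underlying Observation~\ref{obs:function-bound} with $k=3$ and $f(k)=4$: since an $r$-resiliently $k$-colorable graph is $(r+m)$-resiliently $(k+m)$-colorable for every $m\geq 0$, taking $r=0$ and $m=1$ shows that every 3-colorable graph is already 1-resiliently 4-colorable. Concretely, given a proper 3-coloring and any single new edge $uv$, the coloring remains proper unless $u$ and $v$ share a color, in which case recoloring $v$ with the unused fourth color restores properness; as this works for any added edge, the graph is 1-resiliently 4-colorable.

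Combining the two, a polynomial-time algorithm that 4-colors 1-resiliently 4-colorable graphs would in particular 4-color every 3-colorable graph in polynomial time, contradicting the hardness result of~\cite{KLS00}. Hence 4-coloring a 1-resiliently 4-colorable graph is NP-hard.

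There is no real obstacle here: the proposition is a one-line consequence once Observation~\ref{obs:function-bound} is in hand, and essentially all of the difficulty is outsourced to the PCP-based lower bound of~\cite{KLS00}. The only thing needing verification is the trivial claim that appending a single edge to a 3-colorable graph cannot raise its chromatic number above 4, sketched above. I would therefore present this as a short proof, noting explicitly that — unlike Theorem~\ref{thm:3-1}, whose $(3,1)$ cell will require a genuinely new reduction — the $(4,1)$ cell comes for free from prior work.
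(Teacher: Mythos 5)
Your proof is correct and follows essentially the same route as the paper: invoke the Khanna--Linial--Safra hardness of 4-coloring a 3-colorable graph, observe via Observation~\ref{obs:function-bound} that every 3-colorable graph is 1-resiliently 4-colorable (a fourth color repairs any single added edge), and conclude. The paper's proof is the same argument in slightly more compressed form.
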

\begin{proof}
It is known that 4-coloring a 3-colorable graph is NP-hard, so we may apply
Observation~\ref{obs:function-bound}. Every 3-colorable graph $G$ is
1-resiliently 4-colorable, since if we are given a proper 3-coloring of $G$ we
may use the fourth color to properly color any new edge that is added. So an
algorithm $A$ which efficiently 4-colors 1-resiliently 4-colorable graphs can
be used to 4-color a 3-colorable graph.  \hfill $\square$
\end{proof}

\begin{propn}\label{propn:2-to-1-diagonal}
For all $k \geq 3$, it is 2-to-1-hard to $k$-color a $(k-3)$-resiliently
$k$-colorable graph. 
\end{propn}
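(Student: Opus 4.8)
The plan is to reduce from the promise problem of distinguishing graphs of chromatic number $3$ from graphs of chromatic number $k+1$, which is 2-to-1-hard by the theorem of Dinur et al.~\cite{DMR06} (apply it with $K = 3$ and $K' = k+1$; this is legitimate for every $k \geq 3$). In spirit this is exactly the argument behind Observation~\ref{obs:function-bound}, carried out with the underlying NP-hardness replaced by 2-to-1-hardness: the reduction is the same, only the assumption it rests on changes. So the proof can alternatively be phrased as a direct appeal to that observation together with the Dinur et al. bound, after noting that a $k$-coloring algorithm (for $3$-colorable graphs) decides the distinguishing problem.

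First I would pin down where the YES-instances land. If $\chi(G) = 3$ then $G$ is $0$-resiliently $3$-colorable, and since every $r$-resiliently $k$-colorable graph is $(r+m)$-resiliently $(k+m)$-colorable for all $m \geq 0$ (take $r = 0$ and $m = k - 3 \geq 0$), $G$ is $(k-3)$-resiliently $k$-colorable. Hence every YES-instance of the Dinur et al. promise problem is a legal input to the resilient $k$-coloring search problem.

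Next, suppose for contradiction that $A$ is a polynomial-time algorithm that properly $k$-colors every $(k-3)$-resiliently $k$-colorable graph. Given $G$ with the promise $\chi(G) \in \{3, k+1\}$, run $A$ on $G$ and then test in polynomial time whether the returned assignment is a proper $k$-coloring of $G$. If $\chi(G) = 3$, the previous paragraph makes $G$ a legal input, so $A$ returns a proper $k$-coloring and the test accepts; if $\chi(G) = k+1$, then $G$ has no proper $k$-coloring whatsoever, so whatever $A$ returns the test rejects. Thus $A$ decides the Dinur et al. promise problem in polynomial time, contradicting its 2-to-1-hardness, so no such $A$ exists unless the 2-to-1 conjecture fails.

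I do not expect a genuine obstacle here, since the substance is entirely in the cited hardness result. The one point needing care is the NO-case of the reduction: an algorithm that is only promised to behave well on $(k-3)$-resiliently $k$-colorable graphs gives no guarantee on a graph with $\chi = k+1$, so the reduction must not rely on its behavior there — which is precisely why we verify the returned coloring explicitly rather than trusting the algorithm. It is also worth remarking on the boundary value $k = 3$, where ``$(k-3)$-resiliently $k$-colorable'' just means $3$-colorable and the statement is in fact unconditionally NP-hard; this case is subsumed by the argument above, since distinguishing $\chi = 3$ from $\chi = 4$ is (even unconditionally) hard.
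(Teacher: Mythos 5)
Your proof is correct and follows essentially the same route as the paper: the paper invokes Observation~\ref{obs:function-bound} with the Dinur et al.\ conditional hardness of $k$-coloring a $3$-colorable graph, and you unwind that observation into its explicit reduction (promote a $3$-colorable YES-instance to a $(k-3)$-resiliently $k$-colorable one via the $m$-shift, then verify the returned coloring to handle the NO-instances). Your extra care with the promise-problem formulation and the $k=3$ boundary case is sound but not a departure from the paper's argument.
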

\begin{proof}
As with Proposition~\ref{propn:4-1}, we apply
Observation~\ref{obs:function-bound} to the conditional fact that it is NP-hard
to $k$-color a 3-colorable graph for $k > 3$. Such graphs are
$(k-3)$-resiliently $k$-colorable.  \hfill $\square$
\end{proof}

\begin{propn}\label{propn:asymptotic}
For sufficiently large $k$ it is NP-hard to $2^{\sqrt[3]{k}}$-color an
$r$-resiliently $2^{\sqrt[3]{k}}$-colorable graph for $r < 2^{\sqrt[3]{k}} - k$.
\end{propn}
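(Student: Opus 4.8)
The plan is to combine Huang's asymptotic coloring lower bound~\cite{Huang13} with Observation~\ref{obs:function-bound}, in exactly the same style that Propositions~\ref{propn:4-1} and~\ref{propn:2-to-1-diagonal} leverage known hardness results. First I would recall Huang's theorem: for all sufficiently large $k$ it is NP-hard to color a $k$-colorable graph with fewer than $2^{\sqrt[3]{k}}$ colors. Setting $f(k) = 2^{\sqrt[3]{k}}$ (rounded down to an integer if necessary, which does not affect the asymptotics), this says in particular that it is NP-hard to $f(k)$-color a $k$-colorable graph, which is precisely the hypothesis of Observation~\ref{obs:function-bound}.

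Next I would feed this into Observation~\ref{obs:function-bound}, which upgrades any such ``hardness of $f(k)$-coloring a $k$-colorable graph'' statement into the claim that it is NP-hard to $f(k)$-color an $(f(k)-k)$-resiliently $f(k)$-colorable graph; here $f(k) - k = 2^{\sqrt[3]{k}} - k$, so we obtain NP-hardness at resilience exactly $2^{\sqrt[3]{k}} - k$. Finally I would invoke Observation~\ref{obs:horizontal}: NP-hardness of $k$-coloring $r$-resiliently $k$-colorable graphs persists for every smaller resilience parameter, so holding the number of colors at $2^{\sqrt[3]{k}}$ yields NP-hardness for all $r \le 2^{\sqrt[3]{k}} - k$, and in particular for all $r < 2^{\sqrt[3]{k}} - k$ as stated. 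This also populates the entire black region in the right-hand panel of Figure~\ref{fig:classification}.

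The only point needing care is the exact phrasing of Huang's bound --- whether it reads ``fewer than $2^{\sqrt[3]{k}}$ colors'' or ``at most $2^{\sqrt[3]{k}}$ colors'' --- since an off-by-one in the color count shifts the resilience bound by one; but since the proposition asks only for the strict inequality $r < 2^{\sqrt[3]{k}} - k$ there is enough slack to absorb this, and the non-integrality of $2^{\sqrt[3]{k}}$ is likewise harmless. I do not expect a genuine obstacle: all the combinatorial content lives in Huang's PCP-based construction, which we use as a black box, together with the two bookkeeping observations of Section~\ref{sec:resilient-coloring-obs-bounds}.
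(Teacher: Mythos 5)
Your proposal matches the paper's proof exactly: the paper derives Proposition~\ref{propn:asymptotic} precisely by applying Observation~\ref{obs:function-bound} to Huang's lower bound, with Observation~\ref{obs:horizontal} implicitly filling in the smaller values of $r$. Your extra remarks about rounding $2^{\sqrt[3]{k}}$ and the strict inequality are careful bookkeeping that the paper leaves implicit, but the argument is the same.
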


Proposition~\ref{propn:asymptotic} comes from applying
Observation~\ref{obs:function-bound} to the lower bound of
Huang~\cite{Huang13}.  The only unexplained cell of
Figure~\ref{fig:classification} is (3,1), which we prove is NP-hard as our main
theorem in the next section.

\section{NP-hardness of 1-resilient 3-colorability}\label{sec:main-thm}


\begin{thm}\label{thm:3-1}
It is NP-hard to 3-color a 1-resiliently 3-colorable graph.
\end{thm}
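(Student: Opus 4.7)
The plan is to reduce from 1-resilient 6-SAT, which is NP-hard by applying the blowing up lemma with $r = 0$, $s = 2$, $k = 3$ to ordinary 3-SAT. The goal is to transform a 6-CNF formula $\varphi$ into a graph $G_\varphi$ so that if $\varphi$ is 1-resiliently satisfiable then $G_\varphi$ is 1-resiliently 3-colorable, and if $\varphi$ is unsatisfiable then $\chi(G_\varphi) \geq 4$. Any polynomial-time 3-coloring algorithm for 1-resiliently 3-colorable graphs would then decide 1-resilient 6-SAT.

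First I would build $G_\varphi$ from standard pieces inspired by the textbook 3-SAT to 3-coloring reduction, but scaled up for 6-SAT: a palette triangle on $\{T, F, A\}$ fixing the color semantics, a literal triangle $\{x, \bar x, A\}$ for each variable $x$ forcing $x$ and $\bar x$ to take opposite values in $\{T, F\}$, and for each clause $C = \ell_1 \vee \cdots \vee \ell_6$ a clause gadget that attaches to the six literal vertices of $C$ and to $\{F, A\}$, built so that the only obstruction to $3$-coloring $G_\varphi$ is some clause $C$ being falsified. The standard OR-tree construction supplies such a gadget, so satisfiability of $\varphi$ is equivalent to $3$-colorability of $G_\varphi$, which handles the unsatisfiable side of the reduction and one direction of the satisfiable side.

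The heart of the proof is the resilience argument: assuming $\varphi$ is $1$-resiliently satisfiable, I need to show that $G_\varphi + e$ is $3$-colorable for every single added edge $e$. I would proceed by case analysis on where $e$ lies. Edges inside or adjacent to the palette, or inside a single literal triangle, force at most one variable of $\varphi$ to a particular truth value, possibly after relabeling colors; $1$-resilience of $\varphi$ supplies a satisfying assignment compatible with this fix, which extends to a proper $3$-coloring in the usual way. Edges between literal vertices of different variables similarly reduce, after a short bookkeeping argument, to fixing a single variable, and are again absorbed by $1$-resilience of $\varphi$. Edges internal to a clause gadget, or between a clause gadget and one of its literal vertices, are handled locally inside the gadget: because each clause carries six literals rather than three, any single edge disables at most one ``branch'' of the OR-tree, and the remaining five branches leave enough slack to still properly $3$-color the gadget whenever $\varphi$ is satisfiable.

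The hard part will be designing the clause gadget so that it simultaneously (a) enforces $6$-SAT clauses correctly in the unaltered graph, (b) shares only the literal vertices and palette with other gadgets so that the cases above do not interact in unanticipated ways, and (c) tolerates any single internal or incident edge by having enough redundancy that one OR-branch can be safely sacrificed. The choice of $6$-SAT via the blow-up lemma is exactly what makes (c) feasible: a $3$-literal clause gadget is too brittle for single-edge robustness, since an adversarial edge can eliminate the only branch needed to satisfy a clause, whereas with six literals there is always room to spare. Carefully constructing this gadget, verifying that the resilience analysis closes cleanly across all edge-type cases, and ensuring that no adversarial edge can be placed so as to simultaneously force an inconsistent fixing of two variables and a brittle branch of a clause gadget, is where the real work of the proof lies.
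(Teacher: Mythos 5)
Your high-level plan matches the paper's: both reduce from $1$-resilient $6$-SAT (obtained via the blowing-up lemma with $r=0$, $s=2$, $k=3$), build a graph from literal/consistency/clause gadgets, and argue resilience by case analysis on the location of the added edge. However, your proposal relies on the \emph{standard} textbook encoding (a palette triangle $\{T,F,A\}$, one vertex per literal pinned to $\{T,F\}$ by adjacency to $A$, and OR-tree clause gadgets), and that encoding is too rigid for the resilience argument to go through.

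The concrete gap is in your claim that ``edges between literal vertices of different variables reduce, after a short bookkeeping argument, to fixing a single variable.'' They do not. In the standard encoding each literal vertex is forced into $\{T,F\}$, so a new edge $(x,y)$ between the literal vertices of two distinct variables forces $x$ and $y$ to take \emph{opposite} truth values. This is a binary constraint $x \neq y$, not a unary one, and $1$-resilience of $\varphi$ does not absorb it: $1$-resilience only promises that $\varphi$ survives fixing a single variable, and it is easy to construct $1$-resilient formulas whose only satisfying assignments after fixing $x=T$ set $y=T$, and whose only satisfying assignments after fixing $x=F$ set $y=F$, so that $\varphi \wedge (x\neq y)$ is unsatisfiable. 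A similar rigidity problem afflicts adversarial edges landing inside or between OR-tree clause gadgets: the standard OR-tree has no redundancy in how truth values map to colors, so you cannot in general recolor around a bad edge without changing some variable's truth value, and the induced constraint is again not expressible as a single-variable fix. Your proposal correctly flags gadget design as ``where the real work lies,'' but the issue is not merely engineering the clause gadget---the literal encoding itself must change.

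The paper resolves exactly this by using a redundant literal encoding: each literal is represented by a \emph{pair} of vertices adjacent to a gray base vertex, interpreted as true iff the pair is monochromatic, giving two distinct colorings for ``true'' (white-white, black-black) and two for ``false'' (white-black, black-white). This slack is what lets a new edge across literal gadgets be absorbed by switching representations without changing any truth value, and it is also what lets the custom negation and clause gadgets offer at least two color choices at every vertex (up to fixing a single variable). Any repair of your proof essentially has to reinvent this kind of redundancy; with the one-vertex-per-literal encoding, the cross-variable case is a dead end.
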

\begin{proof}
We reduce 1-resilient 3-coloring from 1-resilient 6-SAT. This reduction comes
in the form of a graph which is 3-colorable if and only if the 6-SAT instance is
satisfiable, and 1-resiliently 3-colorable when the 6-SAT instance is
1-resiliently satisfiable. We use the colors
white, black, and gray.

We first describe the gadgets involved and prove their consistency (that the
6-SAT instance is satisfiable if and only if the graph is 3-colorable), and
then prove the construction is 1-resilient. Given a 6-CNF formula $\varphi = C_1
\wedge \dots \wedge C_m$ we construct a graph $G$ as follows. Start with a base
vertex $b$ which we may assume w.l.o.g.\ is always colored
gray. For each literal we construct a \emph{literal gadget} consisting of two
vertices both adjacent to $b$, as in Figure~\ref{fig:literal-gadget}. As such,
the vertices in a literal gadget may only assume the colors white and black. A
variable is interpreted as true iff both vertices in the literal
gadget have the same color. We will abbreviate this by saying a literal is
\emph{colored true} or \emph{colored false}.


\begin{figure}
\floatbox[{\capbeside\thisfloatsetup{capbesideposition={right,top},capbesidewidth=10cm}}]{figure}[\FBwidth]
{\caption{The gadget for a literal. The two single-degree vertices represent a
single literal, and are interpreted as true if they have the same color. The
base vertex is always colored gray. Note this gadget comes from
Kun~et~al.~\cite{KunPR13}.}
\label{fig:literal-gadget}}
{\scalebox{0.35}{\includegraphics{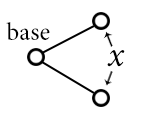}}}
\end{figure}

We connect two literal gadgets for $x, \overline{x}$ by a \emph{negation
gadget} in such a way that the gadget for $x$ is colored true if and only if
the gadget for $\overline{x}$ is colored false. The negation gadget is given in
Figure~\ref{fig:clause-and-negation-gadget}. In the diagram, the vertices
labeled 1 and 3 correspond to $x$, and those labeled 10 and 12 correspond to
$\overline{x}$. We start by showing that no proper coloring can exist if both
literal gadgets are colored true. If all four of these vertices are colored
white or all four are black, then vertices 6 and 7 must also have this color,
and so the coloring is not proper. If one pair is colored both white and the
other both black, then vertices 13 and 14 must be gray, and the coloring is
again not proper.  Next, we show that no proper coloring can exist if both
literal gadgets are colored false. First, if vertices 1 and 10 are white and
vertices 3 and 12 are black, then vertices 2 and 11 must be gray and the
coloring is not proper. If instead vertices 1 and 12 are white and vertices 3
and 10 black, then again vertices 13 and 14 must be gray. This covers all
possibilities up to symmetry.  Moreover, whenever one literal is colored true
and the other false, one can extend it to a proper 3-coloring of the whole
gadget. 

\begin{figure}
\centering
\scalebox{0.35}{\includegraphics{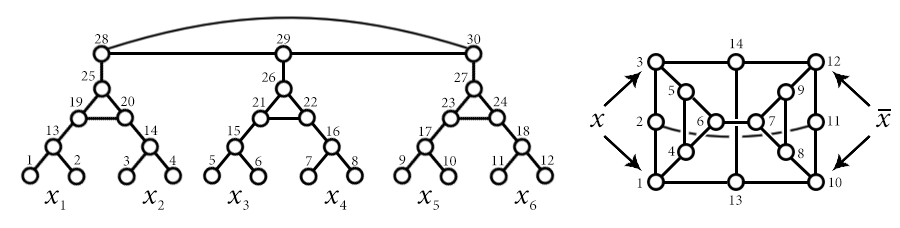}}
\caption{Left: the gadget for a clause. Right: the negation gadget ensuring two
literals assume opposite truth values.} 
\label{fig:clause-and-negation-gadget}
\end{figure}

Now suppose we have a clause involving literals, w.l.o.g., $x_1, \dots, x_6$.
We construct the \emph{clause gadget} shown in
Figure~\ref{fig:clause-and-negation-gadget}, and claim that this gadget is
3-colorable iff at least one literal is colored true. Indeed, if the literals
are all colored false, then the vertices  13 through 18 in the diagram must be
colored gray, and then the vertices 25, 26, 27 must be gray. This causes the
central triangle to use only white and black, and so it cannot be a proper
coloring. On the other hand, if some literal is colored true, we claim we can
extend to a proper coloring of the whole gadget.  Suppose w.l.o.g.\ that the
literal in question is $x_1$, and that vertices 1 and 2 both are black. Then
Figure~\ref{fig:clause-gadget-proof} shows how this extends to a proper
coloring of the entire gadget regardless of the truth assignments of the other
literals (we can always color their branches as if the literals were false). 

\begin{figure}
\vskip -.2in
\floatbox[{\capbeside\thisfloatsetup{capbesideposition={left,center},capbesidewidth=5cm}}]{figure}[\FBwidth]
{\caption{A valid coloring of the clause gadget when one variable (in this case
$x_3$) is true.}\label{fig:clause-gadget-proof}}
{\scalebox{0.35}{\includegraphics{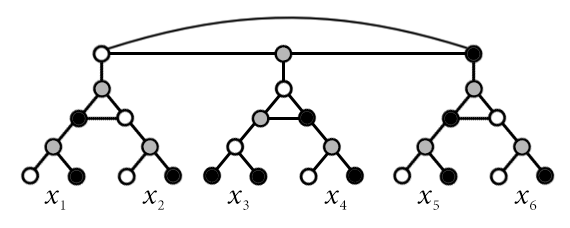}}}
\vskip -.2in
\end{figure}

It remains to show that $G$ is 1-resiliently 3-colorable when $\varphi$ is
1-resiliently satisfiable. This is because a new edge can, at worst, fix the
truth assignment (perhaps indirectly) of at most one literal. Since the
original formula $\varphi$ is 1-resiliently satisfiable, $G$ maintains
3-colorability.  Additionally, the gadgets and the representation of truth were
chosen so as to provide flexibility w.r.t.\ the chosen colors for each vertex,
so many edges will have no effect on $G$'s colorability. 

First, one can verify that the gadgets themselves are 1-resiliently
3-colorable.\footnote{These graphs are small enough to admit verification by
computer search.} We break down the analysis into eight cases based on the
endpoints of the added edge: within a single clause/negation/literal gadget,
between two distinct clause/negation/literal gadgets, between clause and
negation gadgets, and between negation and literal gadgets. We denote the added
edge by $e = (v,w)$ and call it \emph{good} if $G$ is still 3-colorable after
adding $e$. 

\emph{Literal Gadgets}. First, we argue that $e$ is good if it lies within or
across literal gadgets. Indeed, there is only one way to add an edge within a
literal gadget, and this has the effect of setting the literal to false. If $e$
lies across two gadgets then it has no effect: if $c$ is a proper coloring of
$G$ without $e$, then after adding $e$ either $c$ is still a proper coloring or
we can switch to a different representation of the truth value of $v$ or $w$ to
make $e$ properly colored (i.e. swap ``white white'' with ``black black,'' or
``white black'' with ``black white'' and recolor appropriately). 

\emph{Negation Gadgets}. Next we argue that $e$ is good if it involves a
negation gadget. Let $N$ be a negation gadget for the variable $x$. Indeed, by
1-resilience an edge within $N$ is good; $e$ only has a local effect within
negation gadgets, and it may result in fixing the truth value of $x$. Now
suppose $e$ has only one vertex $v$ in $N$. Figure~\ref{fig:coloring-negation}
shows two ways to color $N$, which together with reflections along the
horizontal axis of symmetry have the property that we may choose from at least
two colors for any vertex we wish. That is, if we are willing to fix the truth
value of $x$, then we may choose between one of two colors for $v$ so that $e$
is properly colored regardless of which color is adjacent to it.  

\begin{figure}
\vskip -.2in
\centering
\scalebox{0.35}{\includegraphics{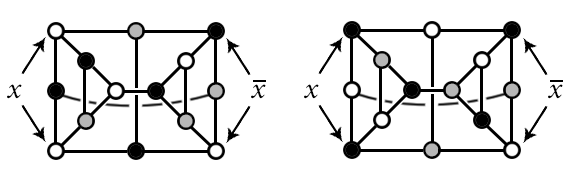}}
\caption{Two distinct ways to color a negation gadget without changing the
truth values of the literals. Only the rightmost center vertex cannot be given
a different color by a suitable switch between the two representations or a
reflection of the graph across the horizontal axis of symmetry. If the new edge
involves this vertex, we must fix the truth value appropriately.}
\label{fig:coloring-negation} \end{figure}

\emph{Clause Gadgets}. Suppose $e$ lies within a clause gadget or between two
clause gadgets. As with the negation gadget, it suffices to fix the truth value
of one variable suitably so that one may choose either of two colors for one
end of the new edge. Figure~\ref{fig:clause-clause-example} provides a detailed
illustration of one case. Here, we focus on two branches of two separate clause
gadgets, and add the new edge $e = (v,w)$. The added edge has the following
effect: if $x$ is false, then neither $y$ nor $z$ may be used to satisfy $C_2$
(as $w$ cannot be gray). This is no stronger than requiring that either $x$ be
true or $y$ and $z$ both be false, i.e., we add the clause $x \vee
(\overline{y} \wedge \overline{z})$ to $\varphi$. This clause can be satisfied
by fixing a single variable ($x$ to true), and $\varphi$ is 1-resilient, so we
can still satisfy $\varphi$ and 3-color $G$.  The other cases are analogous.

This proves that $G$ is 1-resilient when $\varphi$ is, and finishes the proof.
\hfill $\square$
\end{proof}

\begin{figure}
\vskip-.2in
\floatbox[{\capbeside\thisfloatsetup{capbesideposition={right,center},capbesidewidth=4cm}}]{figure}[\FBwidth]
{\caption{An example of an edge added between two clauses $C_1,
C_2$.}\label{fig:clause-clause-example}}
{\scalebox{0.29}{\includegraphics{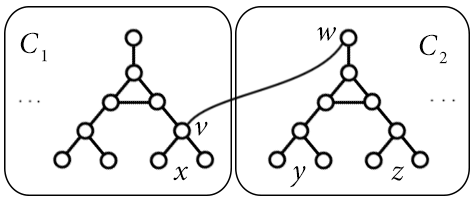}}}
\vskip-.2in
\end{figure}

\section{Discussion and open problems} \label{sec:open-problems}

The notion of resilience introduced in this paper leaves many questions
unanswered, both specific problems about graph coloring and more general
exploration of resilience in other combinatorial problems and CSPs. 

Regarding graph coloring, our paper established the fact that 1-resilience
doesn't affect the difficulty of graph coloring. However, the question of
2-resilience is open, as is establishing linear lower bounds without dependence
on the 2-to-1 conjecture. There is also  room for improvement in finding
efficient algorithms for highly-resilient instances, closing the gap between
NP-hardness and tractability.

On the general side, our framework applies to many NP-complete problems,
including Hamiltonian circuit, set cover, 3D-matching, integer LP, and many
others. Each presents its own boundary between NP-hardness and tractability,
and there are undoubtedly interesting relationships across problems.\\

\noindent \textbf{Acknowledgments}.
We thank Shai Ben-David for helpful discussions.

\bibliographystyle{plain}
\bibliography{main}

\end{document}